\pgfplotsset{width=7cm,compat=1.17}
\newtheorem{theorem}{Theorem}
\newtheorem{corollary}{Corollary}
\newtheorem{assumption}{Assumption}
\newtheorem{remark}{Remark}
\newtheorem{definition}{Definition}
\title{\LARGE \bf 
Robust Learning Model Predictive Control for Periodically Correlated Building Control\\
}
\author{Jicheng Shi$^\dagger$, Yingzhao Lian$^\dagger$, and Colin N. Jones
\thanks{$^\dagger$The first two authors contributed equally}
\thanks{This work has received support from the Swiss National Science Foundation under the RISK project (Risk Aware Data-Driven Demand Response), grant number 200021 175627.}
\thanks{Jicheng Shi, Yingzhao Lian and Colin N. Jones are with Automatic Laboratory, Ecole Polytechnique Federale de Lausanne, Switzerland.
        {\tt\small
        $\{$jicheng.shi, yingzhao.lian, colin.jones$\}$@epfl.ch}}%
}
\begin{document}
\maketitle
\thispagestyle{empty}
\pagestyle{empty}

\begin{abstract}
Accounting for more than 40\% of global energy consumption, residential and commercial buildings will be key players in any future green energy systems. To fully exploit their potential while ensuring occupant comfort, a robust control scheme is required to handle various uncertainties, such as external weather and occupant behaviour. However, prominent patterns, especially periodicity, are widely seen in most sources of uncertainty. This paper incorporates this correlated structure into the learning model predictive control framework, in order to learn a global optimal robust control scheme for building operations. 
\end{abstract}


\section{Introduction}
Around 40\% of global energy use comes from residential and commercial buildings~\cite{laustsen2008energy}, which drives  research interest in building control. Maximizing operational efficiency while maintaining occupant comfort is the key objective therein. However, various sources of uncertainty, such as internal heat gain and outdoor temperature, pose significant challenges to building operation. Even though uncertain, most of them reveal prominent patterns, especially periodicity. For example, the campus load is shown to evolve within a periodic envelope in~\cite{ma2011model}. Moreover, the alternation between days and nights endows internal heat gain and external temperature periodic pattern on a daily basis~\cite{gondhalekar2013least}.

Besides uncertainty, most buildings are also operated under a periodic scheme. Such periodicity has been widely adopted in building control applications~\cite{minakais2014groundhog,yan2010iterative}, where iterative learning control (ILC) is the key tool enabling efficient performance refinement~\cite{bristow2006survey}. On the other hand, model predictive control (MPC) is a receding horizon control scheme that optimally computes its control inputs by recurrent forecast into the future.  Its natural integration of optimization objective and constraints populates its applications in building control~\cite{ma2011model, gondhalekar2013least,oldewurtel2012use}. Taking advantages of both ILC and MPC~\cite{lautenschlager2016data},  both control schemes deal with optimality and robustness separately. Instead of splitting the control task, learning model predictive control (LMPC) is an optimization-based control scheme that unifies monotonic performance improvement and safety/robustness~\cite{rosolia2017learning,rosolia2017robust,rosolia2019robust}.

In this work, we incorporate the periodically correlated uncertainty into the LMPC framework, which enables LMPC to handle time-varying dynamics. Moreover, owing to a priori knowledge of periodic correlation, the proposed scheme shows higher data efficiency and lower conservativeness. The detailed contribution of this paper is concluded as follows:
\begin{itemize}
    \item Explore a parametric decomposition scheme to handle correlated  noise.
    \item Propose a novel less conservative robust LMPC scheme for periodically correlated process noise, which is designed for periodic tasks.
    \item Demonstrate the convergence and optimality of the proposed LMPC scheme.
\end{itemize}

In the following, we will first introduce the building control problem and the classic LMPC control law in Section \ref{sect:2}. In Section \ref{sect:3}, we introduce a decomposition approach of the periodically correlated disturbance and the novel LMPC is illustrated. The recursive feasibility and performance guarantee of the proposed LMPC is discussed in \ref{sect:4}. In Section \ref{sect:5} and \ref{sect:6}, we describe how to adapt different initial states and model uncertainty in the proposed framework and validate the proposed scheme with a spring-mass system and a single zone building system.

\textbf{Notation}\\
Set of consecutive integers $\{a,a+1,\dots,b\}$ is denoted by  $\mathbb{N}_a^b$. $\mathbb{A} \ominus \mathbb{B}$ denotes Pontryagin set difference. Let $\eta^j$ denote the value of $\eta$ at $j$th iteration. Given value of $\eta$ at time $t$ as $\eta_t$, its prediction at $k$ is denoted by $\eta_{k|t}$, similarly, we have $\eta_{t|t} := \eta_{t}$. $\{a_i\}_{i=1}^N$ is a countable set of cardinality $N$, whose elements $a_i$ are indexed by $i$. $\vee$ denotes the logic ``or".

\section{Set Up the Stage} \label{sect:2}

\subsection{Problem setting}
In this work, we consider a building operation on a daily basis, where a discrete-time periodic time-varying linear building model~\cite{weber2005optimized} with period $T$,
\begin{align} \label{dynamics}
    x_{t+1} = A_{t}x_t + B_{t} u_t + C_{t} w_t,\; \forall t\;\in\mathbb{N}_0^T\;,
\end{align}
 where states, control inputs and the bounded process noise are denoted by $x \in \mathbb{R}^n$, $u \in \mathbb{R}^m$ and $w \in \mathbb{R}^d$ separately. 
 
This system is manipulated to execute an iterative task, which means at the $j$th iteration, it starts from $x_0^j=x_s $. The states and inputs are required to satisfy the following periodic, convex polytopic  constraints:
\begin{align}
    F_t x_t + G_t u_t \leq f_t, \forall\;t \in\mathbb{N}_0^T\; .
\end{align}

The optimal building operation problem of the $j$th day is as follows:
\begin{subequations}\label{prob_general}
\begin{align}
    J^{j,\ast} = \underset{\{u_{t}^j\}_{t=0}^T}{\min}&\sum_{t=0}^{T}{l_t(x_t^j,u_t^j)} \notag \\
    \text{s.t.} & \;\forall t\in\mathbb{N}_0^T,\;x_0^j = x_s \notag\\
    & x_{t+1}^j = A_t x_t^j + B_t u_t^j + C_tw_t^j ,\label{eqn:prob_general:dyn}\\
    &F_t x_t^j + G_t u_t^j \leq f_t , \label{eqn:prob_general:con}
\end{align}
\end{subequations}
where $l_t$ denotes the stage cost at time $t$ and $w_t^j$ represents the unknown disturbance / uncertainty within the $j$th day. The horizon $T$ in Problem~\eqref{prob_general} is in general large in building control. For example, if the control law changes every 10 minutes, then $T$ reaches 144.

\subsection{Learning Model Predictive Control}
 Learning model predictive control (LMPC) is an iterative control scheme proposed to learn infinite/long horizon optimal control trajectories, where a relatively short horizon problem is solved in a moving horizon scheme~\cite{rosolia2017learning}. For the sake of clarity, we elaborate LMPC with a deterministic system (\textit{i.e.} $w=0$ in~\eqref{eqn:prob_general:dyn}). At time $t\in \mathbb{N}_0^T$, the following problem is solved:
\begin{subequations}\label{prob:LMPC_ori}
\begin{align}
    \min\limits_{\{u_{k|t}^j\}_{k=t}^{t+N-1}}&\sum_{k=t}^{t+N-1}{l_k(x_{k|t}^j,u_{k|t}^j)}+Q^j(x_{t+N|t}^j)\label{eqn:LMPC_ori:loss}\\
    \text{s.t.} &\;\forall k \in\mathbb{N}_t^{t+N-1}, \;x_{t|t}^j = x_{t}^j\notag\\
    &x_{k+1|t}^j = A x_{k|t}^j + B_k u_{k|t}^j\notag\\
    &F x_{k|t}^j + G u_{k|t}^j \leq f,\notag \\
    &x_{t+N|t}^j\in\mathbb{SS}^j\label{eqn:LPMC_ori:ss}\;.
\end{align}
\end{subequations}
$Q^j(\cdot)$ in~\eqref{eqn:LMPC_ori:loss} and set $\mathbb{SS}^j$ in~\eqref{eqn:LPMC_ori:ss} are two main components which ensure the safety and monotonic improvement of LMPC. In particular, $\mathbb{SS}^j$ denotes the safe set within which there is at least one control law ensuring system safety. This set is constructed as the convex hull of all observed trajectories before the current iteration $j$. Meanwhile, $Q^j(\cdot)$ is an overestimate of the optimal cost-to-go, which ensures  the cost calculated in~\eqref{eqn:LMPC_ori:loss} overestimates the optimal cost in Problem~\eqref{prob_general}. In particular, $Q^j(\cdot)$ is modelled by  parametric quadratic programming in standard LMPC~\cite{rosolia2017learning2}.

The LMPC control scheme guarantees convergence to the infinite/long horizon solution~\cite{rosolia2020optimality} and has been extended to robust control with additive noise~\cite{rosolia2017robust,rosolia2019robust} and deterministic periodic control~\cite{scianca2020learning}.

\section{Main Results} \label{sect:3}
In this section, the incorporation of correlation information is first introduced by finite order approximation in Section~\ref{sect:noise_decomp}. The adapted LMPC algorithm for the resulting problem is then introduced in Section~\ref{sect:LMPC_corr}.

\subsection{Process Noise Decomposition}\label{sect:noise_decomp}
Most sources of uncertainty in building control reveal significant periodic patterns, such as external temperature and internal heat gain. The main idea behind our approach is to decompose the uncertainty information into periodically correlated and uncorrelated parts (\textit{i.e.} white noise). To proceed, we first make the following assumption.

\begin{assumption}\label{ass:integrable}
    $w_t,\; t\in \mathbb{N}_0^T$ is a bounded stochastic process and $\mathbb{E}(w_t) = a_0,\;\forall \;t\in\mathbb{N}_0^T$.
\end{assumption}

$w_t$ is a stochastic process with finite end time $T$, and $w_t^j$ is a realization of this process. More specifically, if $w_t$ is the process of external temperature, then $w_t^j$ is the temperature trajectory on the $j$th day. Assumption~\ref{ass:integrable} ensures that the process noise on the $j$th day is square integrable with respect to its probability space~\cite{jacod2012probability}. By the Karhunen–Loève theorem~\cite{loeve1977elementary}, $w_t^j$ is decomposed based on Fourier series as
\begin{align} \label{disturbance_inf}
    w_t^j = a_0^j + \sum\limits_{q=1}^\infty a_q^j\sin\left(\frac{2\pi q t}{T}\right)+b_q^j\cos\left(\frac{2\pi q t}{T}\right)\;,
\end{align}
where $a_q = \frac{2}{T}\int_0^Tw(t)\sin(\frac{2\pi q t}{T})dt$ and $b_q$ is defined accordingly. To only preserve the low frequency information, Equation~\eqref{disturbance_inf} is further approximated by
\begin{align}\label{disturbance_finite}
    w_t^j &= a_0^j + \sum\limits_{k=1}^{M} \left[a_q^j\sin\left(\frac{2\pi q t}{T}\right)+b_q^j\cos\left(\frac{2\pi q t}{T}\right)\right]+ w_{r,t}^j\\
    &= a_0^j + w_{\theta^j,t}^j+w_{r,t}^j\;\notag,
\end{align}
where $w_{r,t}^j$ models the truncation error caused by the finite order approximation $w_{\theta^j,t}^j$. In particular, the collection of parameters $\theta^j:=\{a_q^j,b_q^j\}_{q=1}^M$ captures the periodic correlation within the $j$th day, which is bounded as $\theta^j\in\mathbb{W}_\theta,\;\forall\;j$. The residue $w_{r,t}$ is a zero-mean bounded white noise whose variance is $\text{var}(w_{r,t})=\mathbb{E}(\sum_{q=M+1}^\infty(||a_q^j||_2^2+||b_q^j||_2^2))$, which is well defined by Assumption~\ref{ass:integrable} and that preserves the energy of the process noise(\textit{i.e.} Parseval theorem~\cite{stein2011fourier}). To explain~\eqref{disturbance_finite} more specifically, one can consider $w_t$ as the external temperature. In the $j$th day, $a_0^j$ is the averaged temperature, $\{a_q^j,\;b_q^j\}_{q=1}^M$ models the daily evolution of the temperature, while $w_{r,t}^j$ models highly stochastic fast fluctuations. Regarding this interpretation, $a_0^j$ and $\theta^j$ vary among days. Similar to~\eqref{disturbance_inf}, other orthogonal basis functions can be used to approximate specific noise patterns, such as Haar Wavelet basis~\cite{unser2014introduction} for internal heating gains. For the sake of simplicity, we elaborate our method with a simpler model as
\begin{align} \label{disturbance}
    w_t^j &= a_0^j + a_1^j\sin(2\pi t/T) + w_{r,t}^j\;\notag\\
    &= w_{\theta^j,t}^j+w_{r,t}^j, \;\theta^j = \{a_0^j,a_1^j\}\;.
\end{align}

\begin{remark}
 Notice that $\{a_q^j,b_q^j\}_{q=1}^\infty$ are realizations of random variables according to the Karhunen–Loève theorem~\cite{loeve1977elementary}, which means that they are fixed in $w_t^j$. In practice, within each iteration, these parameters can be effectively estimated by different methods, such as Bayesian learning~\cite{robert2007bayesian}. 
 \end{remark}
 

\subsection{LMPC for correlated noise}\label{sect:LMPC_corr}
As noise are decomposed into a correlated part and an uncorrelated part in~\eqref{disturbance_finite}, they can be handled separately in the robust control problem. In particular, the white noise $w_{r,t}^j$ are handled by standard robust model predictive control methods~\cite{kouvaritakis2016model} (details in Appendix~\ref{app:robust}). The resulting robust form of the long horizon Problem~\eqref{prob_general} is 
\begin{subequations}\label{prob_general_robust}
\begin{align}
    J^{j,\ast} = \underset{\{v_{t}^j\}_{t=0}^T}{\min}&  \sum_{t=0}^{T}{l_t(z_t^j,v_t^j)} \notag \\
    \text{s.t.} & \;\forall t\in\mathbb{N}_0^T,\;z_0^j = x_s \notag\\
    & z_{t+1}^j = A_t z_t^j + B_t v_t^j + C_tw_{\theta^j,t}^j ,\label{eqn:prob_general_robust:dyn}\\
    &\bar{F}_t z_t + \bar{G}_t v_t \leq \bar{f}_t, \label{eqn:prob_general_robust:con}
\end{align}
\end{subequations}
where $z_t^j,v_t^j$ denotes the state and input of a nominal system, and~\eqref{eqn:prob_general_robust:con} is the tightened constraint (Appendix~\ref{app:robust}).

Correspondingly, the robust form of the LMPC problem~\eqref{prob:LMPC_ori} is:
\begin{align}\label{prob_LMPC}
    J_{LMPC}^{j,\ast} &= \min\limits_{\{v_{k|t}^j\}_{k=t}^{t+N-1}}\sum_{k=t}^{t+N-1}{l_k(z_{k|t}^j,v_{k|t}^j)}+Q_{t+N}^{j}(z_{t+N|t}^j)\notag\\
    \text{s.t.} &\;\forall k \in\mathbb{N}_t^{t+N-1}, \;z_{t|t}^j = z_{t}^j\notag\\
    &z_{k+1|t}^j = A_k z_{k|t}^j + B_k v_{k|t}^j + C_kw_{\theta^j, k}^j\notag \\
    &\bar{F}_k z_{k|t}^j + \bar{G}_k v_{k|t}^j \leq \bar{f}_k, \notag\\
    &z_{t+N|t}^j\in\mathbb{SS}_{t+N}^{j}\;.
\end{align}
The daily changing disturbances included in the dynamics and periodic tasks make classic LMPC not applicable, which requires new adaptive algorithms to calculate $\mathbb{SS}^j_t$ and $Q_t^j(\cdot)$. In the following, we show the strategy of constructing these two main components accordingly. To proceed, we first define the following notation for a more compact layout.

At time $t$ of the $j$th iteration, denote by the vectors
\begin{align}
    &\mathbf{v}_t^{j,\ast} = [v_{t|t}^{j,\ast},v_{t+1|t}^{j,\ast},..., v_{t+N-1|t}^{j,\ast}],\\
    &\mathbf{z}_t^{j,\ast} = [z_{t|t}^{j,\ast},z_{t+1|t}^{j,\ast},..., z_{t+N|t}^{j,\ast}].
\end{align}
the optimal input sequence and the resulting state sequence. Then at time $t$, the input applied to the closed-loop system is
\begin{align} \label{clp_input}
    v_t^{j} =\left\{
        \begin{array}{lr}
        v_{t|t}^{j,\ast}, & t+N \leq T,\\
        v_{t|T-N}^{j,\ast}, & t+N > T.
        \end{array}
\right.
\end{align}

In the following, the idea of historical trajectory shifting will enable us to define the adapted safe sets $\mathbb{SS}_t^j$ and $Q$ function $Q_t^{j}(\cdot)$. Consider at a historical $i$th iteration, the vectors
\begin{align}
    \mathbf{z}^i = [z_0^i, z_1^i, ..., z_T^i] \\
    \mathbf{v}^i = [v_0^i, v_1^i, ..., v_T^i] \notag
\end{align}
record the historical  states and inputs in the closed-loop trajectories. When building a safe set for the $j$th iteration, a shifting method is applied on the historical data, $\mathbf{z}^i$ and $\mathbf{v}^i$. For a shifting starting from time step $t$ of the $i$th historical trajectory, denote by $v_{k|t}^{i,j}$ the shifted input, by $z_{k|t}^{i,j}$ the shifted state, 
by  $e_{k|t}^{i,j} = z_{k|t}^{i,j} - z_k^i$ the error state, the shifting follows a procedure:
\begin{align} \label{sys_nom:err_dyn}
    e_{k+1|t}^{i,j} &= \Phi_k e_{k|t}^{i,j} + C_k (w_{\theta^j,k}^j-w_{\theta^i,k}^i) \notag\\
    v_{k|t}^{i,j} &= v_k^i+ K_k e_{k|t}^{i,j} \notag\\
    z_{k|t}^{i,j} &= z_k^{i} + e_{k|t}^{i,j}, \forall k \in\mathbb{N}_t^{T}\notag\\
\end{align}
and $e_{t|t}^{i,j} = 0$, where $K_k$ is chosen to stabilize $\Phi_k=A_k + B_k K_k$. As a result, $z_{k|t}^{i,j}$ and $v_{k|t}^{i,j}$ satisfy the $j$th dynamics:
\begin{align*}
    z_{k+1|t}^{i,j} &= A_k z_{k|t}^{i,j} + B_k v_{k|t}^{i,j} + C_k w_{\theta^i,k}^i +  C_k(w_{\theta^j,k}^j-w_{\theta^i,k}^i)\\
    &= A_k z_{k|t}^{i,j} + B_k v_{k|t}^{i,j} + C_k w_{\theta^j,k}^j
\end{align*}

Note that the shifted states and inputs may result in infeasible shifted data due to constraint violations. The elimination of these infeasible shifted data leads us to the concept of \textit{Feasible Disturbance Set}.

\begin{definition}[Feasible Disturbance Set]
At time $t$ in an historical iteration, the Feasible Disturbance Set $\mathbb{W}_t^{i}$ is defined as:
\begin{align*}
    \mathbb{W}_t^i = \{\theta | &\bar{F}_k(z_k^{i} + e_{k|t}^{i,.}) + \bar{G}_k(v_k^i+ K_k e_{k|t}^{i,.}) \leq \bar{f}_k,e_{t|t}^{i,.} = 0 \\
    &e_{k+1|t}^{i,.} = \Phi_k e_{k|t}^{i,.} + C_k(w_{\theta,k} - w_{\theta^i,k}^i), \forall k \in\mathbb{N}_t^T\}\\
\end{align*}
\end{definition}
After finishing the $j$th iteration and recording the closed-loop states $\mathbf{z}^j$ and inputs $\mathbf{v}^j$, the feasible disturbance set at each time is computed and recorded.

\begin{algorithm} 
  \caption{Safe set}
  \label{alg:ss} 
{
Given historical closed loop states $\mathbf{z}^i$, inputs $\mathbf{v}^i$, \\
feasible disturbance set $\mathbb{W}_t^i$, $\forall t \in \mathbb{N}_0^T, i \in \mathbb{N}_0^{j-1}$ 

\begin{enumerate}
    \item For $i \in \mathbb{N}_0^{j-1}$, $t \in \mathbb{N}_0^{T}$
    \begin{enumerate}
        \item If $\theta^j \in \mathbb{W}_t^i$
        \begin{enumerate}
            \item Compute the shifting from time $t$
                \item[] $[z_{t|t}^{i,j},...,z_{T|t}^{i,j}]$, $[v_{t|t}^{i,j},...,v_{T|t}^{i,j}]$
            \item Add state $z_{k|t}^{i,j}$ to $\mathbb{SS}_k^j$, $\forall k \in \mathbb{N}_t^T$
            \item Compute and record shifted cumulative cost
                \item[] $J_{k|t}^{i,j}(z_{k|t}^{i,j}) = \sum_{r=k}^{T}{l(z_{r|t}^{i,j},v_{r|t}^{i,j})}$ $\forall k \in \mathbb{N}_t^T$
    \end{enumerate}
    \end{enumerate}
\end{enumerate}
}
\end{algorithm}

Now we build the safe set $\mathbb{SS}_t^j$ for the $j$th iteration by Algorithm \ref{alg:ss}. Note in the shifting starting from time $t$, it computes the shifted states from $t$ to $T$ and each shifted state $z_{k|t}^{i,j}$ is added to $\mathbb{SS}_k^j$ correspondingly. Meanwhile, the estimated cost-to-go (\textit{i.e.} $Q_k^j(\cdot)$ in~\eqref{prob_LMPC}) are updated by shifted cumulative costs $J_{k|t}^{i,j}$ as

\begin{equation}\label{eqn:Qfun}
Q_k^{j}(z) =\left\{
        \begin{array}{lr}
        \underset{(i,t)\in F_k^{j}(z)}{\min}J_{k|t}^{i,j}(z), & \text{if}\; z\in \mathbb{SS}_k^{j}\\
        +\infty, & \text{if} \; z\notin \mathbb{SS}_k^{j}
        \end{array}
\right.
\end{equation}
where $F_k^{j}(z)=\{(i,t): i\in[0,j-1],t\in[0,k]$ with $z_{k|t}^{i,j}=z$, for $z_{k|t}^{i,j} \in \mathbb{SS}_k^{j}\}$. Note different from~\cite{rosolia2017learning}, at the $j$th iteration, $\mathbb{SS}_t^{j}$ and $Q_t^{j}(z)$ are built for each time step $t$.

\begin{remark}
  At each time step $t$ and each shifted state $z$ in $\mathbb{SS}_t$, $Q_t^{j}(z)$ is assigned a value, $J_{k|t^{\ast}}^{i^{\ast},j}$, which is the minimal shifted cumulative cost starting from $z_{k|t^{\ast}}^{i^{\ast},j}=z$. $(i^{\ast},t^{\ast})$  is chosen by the minimizer in~\eqref{eqn:Qfun}:
  \begin{align*}
      (i^{\ast},t^{\ast}) = \underset{(i,t)\in F_k^{j}(z)}{\operatorname{argmin}}J_{k|t}^{i,j}(z),\forall z\in \mathbb{SS}_k^{j}
  \end{align*}
\end{remark}

\begin{assumption} \label{assu:distubance_bound}
     Assume a feasible trajectory at the $0$th iteration, $\{\mathbf{z}^0, \mathbf{v}^0\}$, is given and all the disturbance feasible sets are subject to, $\mathbb{W}_t^0 \supseteq \mathbb{W}_{\theta}$.
\end{assumption}

Assumption \ref{assu:distubance_bound} is standard under the LMPC control scheme. It results in a non-empty safe set $\mathbb{SS}_t^j,\;\forall\;t\in\mathbb{N}_0^T, j\in \mathbb{N}_+$. In practice, Assumption \ref{assu:distubance_bound} is not restrictive as it essentially requires a default feasible control law. It is also noteworthy to point out that neither an historical nor a shifted trajectory are required to achieve a steady state, while this convergence requirement is necessary for classic LMPC. 

\begin{remark}
 The online computation increase of the proposed scheme is fair, as feasible disturbance sets $\mathbb{W}_t^{i}$, safe set $\mathbb{SS}_t^j$ and Q function $Q_t^j(\cdot)$ only update at the beginning of each iteration.
\end{remark}

\begin{remark}
Even though this work has a special focus on building control, the proposed scheme can be adopted to most time-varying periodic tasks.
\end{remark}

\section{Properties} \label{sect:4}
In this section, the properties of the proposed LMPC method are presented, including feasibility and performance.
\subsection{Recursive Feasibility}

\begin{theorem}[Recursive Feasibility]
Suppose Assumption 2 is satisfied, then the problem~\eqref{prob_LMPC} is feasible for any time step $t$ at any $j$th iteration.
\end{theorem}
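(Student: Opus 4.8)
The plan is to prove recursive feasibility in two stages: first exhibiting a feasible candidate at the initial time of every iteration (the base case), then propagating feasibility from time $t$ to $t+1$ through an explicit candidate construction (the induction step). Throughout, the argument is carried out on the nominal system of~\eqref{prob_LMPC}, since the tightening $\bar{F}_k,\bar{G}_k,\bar{f}_k$ already absorbs the residual white noise $w_{r,t}^j$ robustly, so it suffices to reason about the nominal trajectories $z,v$.

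For the base case I would invoke Assumption~\ref{assu:distubance_bound}. Because $\theta^j\in\mathbb{W}_\theta\subseteq\mathbb{W}_{t'}^0$ for every $t'$, the feasible $0$th trajectory can be shifted starting from any time step, so Algorithm~\ref{alg:ss} populates every $\mathbb{SS}_t^j$, which are therefore non-empty for all $j\in\mathbb{N}_+$. Shifting the $0$th trajectory from $t'=0$ yields, since $e_{0|0}^{0,j}=0$, a trajectory beginning at $z_{0|0}^{0,j}=z_0^0=x_s$ whose first $N$ inputs together with the terminal state $z_{N|0}^{0,j}\in\mathbb{SS}_N^j$ constitute a feasible point of~\eqref{prob_LMPC} at $t=0$.

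For the induction step, suppose~\eqref{prob_LMPC} is feasible at time $t$ with minimizer $\mathbf{v}_t^{j,\ast},\mathbf{z}_t^{j,\ast}$; applying $v_{t|t}^{j,\ast}$ drives the nominal state to $z_{t+1}^j=z_{t+1|t}^{j,\ast}$. The terminal state obeys $z_{t+N|t}^{j,\ast}\in\mathbb{SS}_{t+N}^j$, hence it coincides with some stored shifted state $z_{t+N|t'}^{i,j}$, and this membership certifies (via the test $\theta^j\in\mathbb{W}_{t'}^i$ in Algorithm~\ref{alg:ss}) that the entire shift from $t'$ respects the tightened constraints. I would then build the candidate at $t+1$ by concatenating the shifted tail $\tilde v_{k|t+1}^j=v_{k|t}^{j,\ast}$ for $k\in\mathbb{N}_{t+1}^{t+N-1}$ with the single safe-set input $\tilde v_{t+N|t+1}^j=v_{t+N|t'}^{i,j}$. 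The tail reproduces $z_{k|t}^{j,\ast}$ exactly, because both the stored shift and~\eqref{prob_LMPC} use the same current-iteration disturbance $w_{\theta^j}^j$; the appended step then yields $\tilde z_{t+N+1|t+1}^j=z_{t+N+1|t'}^{i,j}$, which lies in $\mathbb{SS}_{t+N+1}^j$ since the shift from $t'$ deposits all its states up to $T$ into the corresponding safe sets. Constraint feasibility follows in the same way: the tail inherits it from the optimal solution at $t$, and the appended step inherits it from $\theta^j\in\mathbb{W}_{t'}^i$. The candidate is thus feasible at $t+1$, closing the induction for $t\in\mathbb{N}_0^{T-N}$; the remaining steps with $t+N>T$ reuse the frozen inputs $v_{t|T-N}^{j,\ast}$ of~\eqref{clp_input} and inherit feasibility directly.

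I expect the main obstacle to be the bookkeeping that ties three facts together precisely: that $z_{t+N|t}^{j,\ast}\in\mathbb{SS}_{t+N}^j$ identifies a concrete pair $(i,t')$ with $\theta^j\in\mathbb{W}_{t'}^i$; that the same-disturbance shifting makes the one-step extension land exactly on the next stored state $z_{t+N+1|t'}^{i,j}$; and that this next state was indeed added to $\mathbb{SS}_{t+N+1}^j$ by Algorithm~\ref{alg:ss}. Establishing this chain is what makes the terminal-set constraint self-reproducing, after which the dynamics-matching and constraint-satisfaction verifications are routine.
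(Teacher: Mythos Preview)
Your proposal is correct and follows essentially the same route as the paper: a base case supplied by the shifted $0$th trajectory (using Assumption~\ref{assu:distubance_bound} to guarantee $\theta^j\in\mathbb{W}_0^0$ so that $z_{N|0}^{0,j}\in\mathbb{SS}_N^j$), followed by the standard induction that appends one safe-set step $v_{t+N|t'}^{i,j}$ to the shifted optimal tail, with the next shifted state $z_{t+N+1|t'}^{i,j}$ landing in $\mathbb{SS}_{t+N+1}^j$ by construction of Algorithm~\ref{alg:ss}. Your bookkeeping of the pair $(i,t')$ and the explicit treatment of the frozen-input regime $t+N>T$ are, if anything, slightly more careful than the paper's own write-up.
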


\begin{proof}
By Assumption 2, $0$th iteration offers a shifted trajectory starting from time $0$. Thus, at the $j$th iteration, the shifted state $z_{N|0}^{0,j} \in \mathbb{SS}_N^j$. At the time step $0$ of $j$th iteration, the following shifted state and input vectors is feasible for the problem~\eqref{prob_LMPC}:
\begin{align*}
    &[v_{0|0}^{0,j},v_{1|0}^{0,j},...,v_{N-1|0}^{0,j}],\\
    &[z_{0|0}^{0,j},z_{1|0}^{0,j},...,z_{N|0}^{0,j}].
\end{align*}

Assume at time step $t$ of $j$th iteration, the problem~\eqref{prob_LMPC} is feasible, with the optimal solution $\mathbf{v}_t^{j,\ast}$ and the corresponding state sequence $\mathbf{z}_t^{j,\ast}$. Note that $z_{t+N|t}^{j,\ast} \in \mathbb{SS}_{t+N}^{j}$. By the definition of Q function, $z_{t+N|t}^{j,\ast} = z_{t+N|t^{\ast}}^{i^{\ast},j}$, which is a shifted state at time $t+N$ starting from some time $t^{\ast}\leq t+N$ at $i^{\ast}$th iteration. Then with the corresponding shifted input $
v_{t+N|t^{\ast}}^{i^{\ast},j}$ we have the next shifted state $z_{t+N+1|t^{\ast}}^{i^{\ast},j} = A_{t+N}z_{t+N|t^{\ast}}^{i^{\ast},j} + B_{t+N}v_{t+N|t^{\ast}}^{i^{\ast},j} +C_{t+N}w_{\theta^j,t+N}^j$. From the Algorithm 1, $z_{t+N+1|t^{\ast}}^{i^{\ast},j} \in \mathbb{SS}_{t+N+1}^j$. Thus, time step $t+1$, the input sequence and corresponding state sequence
\begin{align*}
    &[v_{t+1|t}^{j,\ast},v_{t+2|t}^{j,\ast},...,v_{t+N-1|t}^{j,\ast},v_{t+N|t^{\ast}}^{i^{\ast},j}],\\
    &[z_{t+1|t}^{j,\ast},z_{t+2|t}^{j,\ast},...,z_{t+N|t}^{j,\ast},z_{t+N+1|t^{\ast}}^{i^{\ast},j}]
\end{align*}
is feasible. Finally, by induction, the theorem is proved.
\end{proof}

\subsection{Performance}
In this section, we present two results regarding controller performance. At the $j$th iteration, denote the optimal value of the objective function of the problem~\eqref{prob_LMPC} at time step $t$ by $J_{LMPC}^{j,\ast}(z_t^j) = \sum_{k=t}^{N}{l_k(z_{k|t}^{j,\ast},v_{k|t}^{j,\ast})}+Q_{t+N}^{j}(z_{t+N|t}^{j,\ast})$, the closed-loop cumulative cost starting from time $t$ by $J^{j}(z_t^j) = \sum_{k=t}^{T}{l_k(z_{k}^{j},v_{k}^{j})}$.

\begin{assumption} \label{ass:cost_eco}
    Consider a continuous, semi-positive and convex stage cost function $l_t(z,v)\geq0$
\end{assumption}
Different from~\cite{rosolia2017learning}, the stage cost is not limited to a tracking error. An economic cost can be used, like the electricity cost in building control, for example.

\begin{theorem} \label{thm_noninc}
Under Assumption~\ref{ass:cost_eco}, for each $t\in \mathbb{N}_0^{T-N}$ of the $j$th iteration, the cumulative trajectory cost $J^{j}(z_t^j)$ is upper bounded by the shifted trajectory cost $J_{t|t'}^{i,j}(z_{t|t'}^{i,j})$, starting from any $z_{t|t'}^{i,j}=z_t^j \in \mathbb{SS}_t^j$. Specially, if $\theta^j \in \mathbb{W}_0^i$, $J^{j}(z_0^j)\leq J_{0|0}^{i,j}(z_0^j)$.
\end{theorem}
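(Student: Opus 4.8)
The plan is to sandwich the closed-loop cost between the LMPC optimal value and the shifted trajectory cost, i.e. to establish the chain $J^{j}(z_t^j)\le J_{LMPC}^{j,\ast}(z_t^j)\le J_{t|t'}^{i,j}(z_{t|t'}^{i,j})$ for every feasible shift with $z_{t|t'}^{i,j}=z_t^j$. Both inequalities rest on the same two facts already in hand: the $Q$-function is the pointwise minimum of the recorded shifted cumulative costs, so $Q_k^{j}(z)\le J_{k|t'}^{i,j}(z)$ whenever $z=z_{k|t'}^{i,j}\in\mathbb{SS}_k^{j}$; and any shifted trajectory admitted into the safe set by Algorithm~\ref{alg:ss} satisfies the $j$th dynamics and the tightened constraints, hence supplies a feasible candidate for Problem~\eqref{prob_LMPC}.

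First I would prove the right-hand inequality. Fix a shift $(i,t')$ with $z_{t|t'}^{i,j}=z_t^j\in\mathbb{SS}_t^{j}$. Its truncation to the first $N$ inputs $v_{t|t'}^{i,j},\dots,v_{t+N-1|t'}^{i,j}$ together with the states $z_{t|t'}^{i,j},\dots,z_{t+N|t'}^{i,j}$ is feasible for Problem~\eqref{prob_LMPC} at time $t$, since the terminal state $z_{t+N|t'}^{i,j}$ lies in $\mathbb{SS}_{t+N}^{j}$ by construction. Evaluating the LMPC objective on this candidate and bounding the terminal term through $Q_{t+N}^{j}(z_{t+N|t'}^{i,j})\le J_{t+N|t'}^{i,j}(z_{t+N|t'}^{i,j})=\sum_{k=t+N}^{T}l_k(z_{k|t'}^{i,j},v_{k|t'}^{i,j})$ collapses the candidate cost to exactly $J_{t|t'}^{i,j}(z_{t|t'}^{i,j})$; optimality then yields $J_{LMPC}^{j,\ast}(z_t^j)\le J_{t|t'}^{i,j}(z_{t|t'}^{i,j})$.

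The left-hand inequality is the performance-descent argument. Reusing the candidate built in the recursive-feasibility proof, I would establish the one-step bound $J_{LMPC}^{j,\ast}(z_{t+1}^j)\le J_{LMPC}^{j,\ast}(z_t^j)-l_t(z_t^j,v_t^j)$: the minimizer $(i^\ast,t^\ast)$ defining $Q_{t+N}^{j}(z_{t+N|t}^{j,\ast})$ lets me split off the stage cost and rewrite $Q_{t+N}^{j}(z_{t+N|t}^{j,\ast})\ge l_{t+N}(z_{t+N|t}^{j,\ast},v_{t+N|t^\ast}^{i^\ast,j})+Q_{t+N+1}^{j}(z_{t+N+1|t^\ast}^{i^\ast,j})$, after which appending the shifted input $v_{t+N|t^\ast}^{i^\ast,j}$ closes the telescoping gap; non-negativity and continuity from Assumption~\ref{ass:cost_eco} ensure every stage cost and $Q$-value is finite and well defined. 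Telescoping this bound along the closed loop and cancelling the intermediate LMPC values gives $J^{j}(z_t^j)\le J_{LMPC}^{j,\ast}(z_t^j)$. Chaining the two inequalities proves the statement, and the special case $t=0$ follows because $\theta^j\in\mathbb{W}_0^i$ is exactly the condition under which the shift from $(i,0)$ at $x_s$ is feasible, so $z_{0|0}^{i,j}=z_0^j\in\mathbb{SS}_0^{j}$.

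The hard part will be the terminal boundary: for steps $k>T-N$ the applied input switches to $v_{k|T-N}^{j,\ast}$ by~\eqref{clp_input} rather than the first element of a fresh solve, so the one-step descent cannot be iterated all the way to $T$. I would instead telescope only up to $T-N$, obtaining $\sum_{k=t}^{T-N-1}l_k(z_k^j,v_k^j)\le J_{LMPC}^{j,\ast}(z_t^j)-J_{LMPC}^{j,\ast}(z_{T-N}^j)$, and then argue that the closed-loop tail cost $\sum_{k=T-N}^{T}l_k(z_k^j,v_k^j)$ equals $J_{LMPC}^{j,\ast}(z_{T-N}^j)$ exactly, since the states and inputs for $k\le T-1$ coincide with the $(T-N)$-solve and the terminal input realizes $Q_T^{j}(z_{T|T-N}^{j,\ast})=l_T(z_T^j,v_T^j)$. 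Matching this terminal input with the shift attaining the minimum in $Q_T^{j}$, and charging $l_T$ exactly once, is the delicate bookkeeping step; a secondary point is confirming the $Q$-decomposition itself, namely that the tail of the cost-attaining shift is a recorded shifted cumulative cost dominating $Q_{t+N+1}^{j}$.
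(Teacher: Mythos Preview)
Your proposal is correct and follows the same sandwich argument as the paper's proof: establish $J^{j}(z_t^j)\le J_{LMPC}^{j,\ast}(z_t^j)$ via a one-step descent inequality telescoped along the closed loop, and $J_{LMPC}^{j,\ast}(z_t^j)\le J_{t|t'}^{i,j}(z_{t|t'}^{i,j})$ by feeding the truncated shifted trajectory as a feasible candidate into Problem~\eqref{prob_LMPC} and bounding the terminal term through the $Q$-function. Your careful handling of the terminal boundary (the switch in~\eqref{clp_input} for $k>T-N$) is a detail the paper's proof simply glosses over in its final telescoping step.
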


\begin{proof}
At time step $t$($\in \mathbb{N}_0^{T-N})$ of the $j$th iteration, the optimal cost of LMPC is:
\begin{align} \label{pf:nonin}
    J&_{LMPC}^{j,\ast}(z_t^j)  \notag\\ &=\underset{\{v_{k|t}^j\}_{k=t}^{t+N-1}}{min}\sum_{k=t}^{N-1}{l_k(z_{k|t}^j,v_{k|t}^j)}+Q_{t+N}^{j}(z_{t+N|t}^j) \notag\\
    &= l_t(z_{t|t}^{j,\ast},v_{t|t}^{j,\ast}) +\sum_{k=t+1}^{N-1}{l_k(z_{k|t}^{j,\ast},v_{k|t}^{j,\ast})}+Q_{t+N}^{j}(z_{t+N|t}^{j,\ast})\notag\\
    &\geq l_t(z_{t|t}^{j,\ast},v_{t|t}^{j,\ast}) +\sum_{k=t+1}^{N-1}{l_k(z_{k|t}^{j,\ast},v_{k|t}^{j,\ast})}+ \notag\\ & \quad \quad l_{t+N}(z_{t+N|t^{\ast}}^{i^{\ast},j},v_{t+N|t^{\ast}}^{i^{\ast},j}) +Q_{t+N+1}^{j}(z_{t+N+1|t^{\ast}}^{i^{\ast},j})\notag\\ 
    &\geq l_t(z_{t|t}^{j,\ast},v_{t|t}^{j,\ast}) + J_{LMPC}^{j,\ast}(z_{t+1|t}^{j,\ast})
\end{align}
In the first inequity, $z_{t+N|t}^{j,\ast} = z_{t+N|t^{\ast}}^{i^{\ast},j}$, which is a shifted state at time $t+N$ starting from some time $t^{\ast}\leq t+N$ and the inequity comes from the definition of Q function.

Then under Assumption \ref{ass:cost_eco}, from~\eqref{clp_input} and~\eqref{pf:nonin}, $J_{LMPC}^{j,\ast}$ is non-increasing along the closed loop trajectory,
\begin{align} \label{cost_dec}
J_{LMPC}^{j,\ast}(z_{t+1}^{j}) - J_{LMPC}^{j,\ast}(z_t^j)\leq -l_t(z_{t|t}^{j},v_{t|t}^{j}) \leq 0
\end{align}
By~\eqref{clp_input} and~\eqref{cost_dec}, the cumulative trajectory cost $J^{j}(z_t^j)$ is upper bounded by $J_{LMPC}^{j,\ast}(z_t^j)$:
\begin{align}
    J_{LMPC}^{j,\ast}(z_t^j) &\geq l_t(z_{t}^{j},v_{t}^{j})+ J_{LMPC}^{j,\ast}(z_{t+1}^{j}) \notag\\
    &\geq l_t(z_{t}^{j},v_{t}^{j})+l_{t+1}(z_{t+1}^{j},v_{t+1}^{j})+ J_{LMPC}^{j}(z_{t+2}^{j}) \notag\\
    &\geq \sum_{k=t}^{T}{l_k(z_{k}^{j},v_{k}^{j})} = J^{j}(z_t^j)
\end{align}

Then we show the shifted trajectory cost $J_{t|t'}^{i,j}(z_{t|t'}^{i,j})$, starting from any shifted state $z_{t|t'}^{i,j}=z_t^j \in \mathbb{SS}_t^j$, is lower bounded by $J_{LMPC}^{j,\ast}(z_t^j)$:

\begin{align} 
    J&_{t|t'}^{i,j}(z_{t|t'}^{i,j}) = \sum_{k=t}^{T}{l_k(z_{k|t'}^{i,j},v_{k|t'}^{i,j})}\notag\\
    &=\sum_{k=t}^{t+N-1}{l_k(z_{k|t'}^{i,j},v_{k|t'}^{i,j})} + \sum_{k=t+N}^{T}{l_k(z_{k|t'}^{i,j},v_{k|t'}^{i,j})}\notag\\
    &\geq \sum_{k=t}^{t+N-1}{l_k(z_{k|t'}^{i,j},v_{k|t'}^{i,j})} + Q_{t+N}^{j}(z_{t+N|t'}^{i,j})\notag\\
    &\geq \underset{\{v_{k|t}^j\}_{k=t}^{t+N-1}}{min} \sum_{k=t}^{t+N-1}{l_k(z_{k|t}^{j},v_{k|t}^{j})} +Q_{t+N}^{j}(z_{t+N|t}^{i,j})\notag\\
    &=J_{LMPC}^{j,\ast}(z_t^j)
\end{align}
\end{proof}

After execution of the $j$th iteration, if in a new iteration $j'$, the same disturbance parameters occur $\theta^{j'} =\theta^{j} $, $z_t^j$ can be added in $\mathbb{S}_t^j$ without shifting. Then by Theorem \ref{thm_noninc}, $J^{j'}(x_s)\leq J_{0|0}^{j,j'}(x_s)=J^{j}(x_s)$, which means the closed-loop iteration cost does not increase.

\begin{corollary}

Under Assumption~\ref{ass:cost_eco}, considering that the system \ref{dynamics} is controlled by the proposed periodic LMPC \eqref{prob_LMPC} and \eqref{clp_input}, if at the $j$th iteration, it achieves a steady-state solution $\{\mathbf{z}^{j,ss},$ $\mathbf{v}^{j,ss}\}$ with respect to $\theta^j$, then $\{\mathbf{z}^{j,ss},$ $\mathbf{v}^{j,ss}\}$ is the optimal solution of~\eqref{prob_general_robust}.

\begin{proof}
The proof follows a similar procedure to that in~\cite[Theorem 1]{rosolia2020optimality} as~\eqref{prob_general_robust} is strictly convex. 
\end{proof}
\end{corollary}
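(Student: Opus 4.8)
The plan is to show that the steady-state trajectory $\{\mathbf{z}^{j,ss},\mathbf{v}^{j,ss}\}$ is simultaneously feasible for and cost-optimal in Problem~\eqref{prob_general_robust}, and then to invoke strict convexity of~\eqref{prob_general_robust} to conclude it is the \emph{unique} minimizer. First I would observe that, by construction, the steady-state trajectory satisfies the nominal dynamics with $w_{\theta^j}$ and the tightened constraints of~\eqref{prob_general_robust}, so it is a feasible point of that problem. This immediately yields the lower bound $J^{j}(x_s)\geq J^{j,\ast}$, where $J^{j}(x_s)=\sum_{t=0}^{T}l_t(z_t^{j,ss},v_t^{j,ss})$ denotes its cumulative cost. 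The task then reduces to establishing the reverse inequality $J^{j}(x_s)\leq J^{j,\ast}$; combined with uniqueness of the optimizer, this identifies the trajectory as the optimal solution.

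Next I would exploit the self-containment noted just before the corollary: because $\theta^{j'}=\theta^j$ lets the steady-state states enter the safe set without shifting, for every $t$ we have $z_t^{j,ss}\in\mathbb{SS}_t$ with recorded cost-to-go equal to the exact closed-loop cost-to-go $\sum_{k=t}^{T}l_k(z_k^{j,ss},v_k^{j,ss})$. Treating the steady-state trajectory as both the closed-loop trajectory and a (trivially) shifted trajectory, the two bounds of Theorem~\ref{thm_noninc} squeeze together, giving $J_{LMPC}^{j,\ast}(z_t^{j,ss})=\sum_{k=t}^{T}l_k(z_k^{j,ss},v_k^{j,ss})$ for every $t$. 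In particular the LMPC value function telescopes exactly along the trajectory, $J_{LMPC}^{j,\ast}(z_t^{ss})=l_t(z_t^{ss},v_t^{ss})+J_{LMPC}^{j,\ast}(z_{t+1}^{ss})$.

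The crux, mirroring \cite[Theorem 1]{rosolia2020optimality}, is to upgrade this self-consistency to genuine global optimality. I would argue that $J_{LMPC}^{j,\ast}$, evaluated on the converged safe set, satisfies the dynamic-programming recursion of~\eqref{prob_general_robust}: that $v_t^{ss}$ not only reproduces the telescoped value but actually attains the one-step minimum $\min_{v}\bigl[\,l_t(z_t^{ss},v)+J_{LMPC}^{j,\ast}(A_t z_t^{ss}+B_t v+C_t w_{\theta^j,t})\,\bigr]$ subject to the tightened constraints. Any feasible deviation $v'$ whose successor state remains in the convex-hull safe set would, by concatenation with the stored policy, yield an LMPC cost strictly below $J_{LMPC}^{j,\ast}(z_t^{ss})$, contradicting optimality; convexity of the stage costs (Assumption~\ref{ass:cost_eco}) and the convex-hull structure of $\mathbb{SS}_t$ keep the relevant successors where the value function is defined. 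With the Bellman recursion holding and $Q$ over-estimating the true cost-to-go, the converged value function coincides with the optimal cost-to-go of~\eqref{prob_general_robust}, and strict convexity makes this fixed point unique. This gives $J^{j}(x_s)=J_{LMPC}^{j,\ast}(x_s)=J^{j,\ast}$.

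I expect the main obstacle to be precisely this last step: verifying Bellman optimality at the fixed point in the present time-varying, finite-horizon periodic setting with \emph{per-time-step} safe sets $\mathbb{SS}_t^j$ and cost functions $Q_t^j(\cdot)$, rather than the single stationary safe set of the classical theory. One must ensure that the minimizing one-step successor does not leave the time-indexed convex-hull safe set and that the over-estimating terminal cost $Q_{t+N}$ becomes tight at convergence; the finite period $T$ and the daily-varying disturbance parametrization $w_{\theta^j}$ make the bookkeeping across time indices the delicate part, even though the underlying convexity argument itself is standard.
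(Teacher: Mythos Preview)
Your proposal is correct and follows essentially the same route the paper points to: the paper's own proof is a one-line deferral to \cite[Theorem~1]{rosolia2020optimality} together with strict convexity of~\eqref{prob_general_robust}, and you have unpacked precisely that argument---feasibility gives the lower bound, the squeeze from Theorem~\ref{thm_noninc} at the fixed point makes $J_{LMPC}^{j,\ast}$ coincide with the closed-loop cost-to-go, a Bellman-optimality contradiction rules out any improving one-step deviation, and strict convexity pins down uniqueness. Your caveat about the per-time-step safe sets $\mathbb{SS}_t^{j}$ and terminal costs $Q_t^{j}$ is exactly the bookkeeping adaptation needed when porting the stationary Rosolia argument to this periodic, time-indexed setting, and the paper implicitly relies on the same adaptation without spelling it out.
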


\section{Practical Issues} \label{sect:5}
In practice, the initial state of each iteration is not necessarily the same, i.e. $\exists\; i<j, z_0^i \ne z_0^j$. For example, even if the building controller is idle in the evening and the system state converges to a steady state due to the dissipative nature of a building, the resulting steady state also varies due to external temperature.

One solution is to involve the initial state deviation as part of the disturbance function $w_t$. By defining a nominal initial state $x_{s,n}$ and the deviation between it and initial state at $j$th iteration $w_s^j = z_0^j - x_{s,n}$, an extension of the disturbance function is
\begin{align}
    w_{\theta^j,t}^j(w_s^j)  =\left\{
        \begin{array}{lr}
        w_s^j, & \; \, t=-1 \\
        w_{\theta^j,t}^j, & o.w.
        \end{array}
\right.
\end{align}

This has an influence on the shifting procedure (\ref{sys_nom:err_dyn}) starting from time $0$,
\begin{align}
    e_{k+1|0}^{i,j} &= (A_k+B_k K_k)e_{k|0}^j + C_k(w_{\theta^j,k}^j(w_s^j)-w_{\theta^i,k}^i(w_s^i)) \notag\\
    v_{k|0}^{i,j} &= v_k^i+ K_k e_{k|0}^{i,j} \notag\\
    z_{k|0}^{i,j} &= z_k^{i} + e_{k|0}^{i,j}, \forall k \in\mathbb{N}_0^T\notag\\
\end{align}
and $e_{0|0}^{i,j} = w_s^j-w_s^i$, and the feasible disturbance set $\mathbb{W}_0^{i}$ for $\{z_0,\theta\}$ at time $0$ is recomputed by the above error dynamics.

Similarly, if the dynamics of system (\ref{dynamics}) vary from iteration to iteration. Define the nominal dynamics matrices as $\overline{A}_t, \overline{B}_t$, and the dynamics deviation
$dA_t^j = A_t^j - \overline{A}_t$, and further assume that $K_t$ stabilizes all possible $A_t^j+B_t^j$. A new shifting procedure starting from time $t$ is,
\begin{align}
    e_{k+1|t}^{i,j} =& (A_k^j+B_k^j K_k)e_{k|t}^j + C_k(w_{\theta^j,k}^j
    -w_{\theta^i,k}^i)\notag\\ &+ (dA_k^j-dA_k^i)*z_k^i+(dB_k^j-dB_k^i)*v_k^i \notag\\
    v_{k|t}^{i,j} =& v_k^i+ K_k e_{k|t}^{i,j} \notag\\
    z_{k|t}^{i,j} = &z_k^{i} + e_{k|t}^{i,j}, \forall k \in\mathbb{N}_t^T\notag\\
\end{align}
and $e_{t|t}^{i,j} = 0$, and the new feasible disturbance set $\mathbb{W}_t^{i}$ for $\{A_t,B_t,\theta\}$ is computed based on that.

\section{Simulation and results}\label{sect:6}
In this section, the proposed LMPC is tested on a spring-mass system and a single zone building model. The first case involves periodic dynamics, periodic constraints, periodic stage costs and a sinusoidal disturbance.  The latter case considers a periodic tracking task, where scheduled comfort conditions on temperatures and three different correlated real-world disturbances decomposition are considered. 

\subsection{Spring-mass system}
We test the proposed robust LMPC on a spring-mass system $x_{t+1} = A_tx_t +B_tu_t + w_t$, which executes a periodic task of length $T=50$ and the corresponding time-varying dynamics is captured by:
\begin{align*}
    A_t = \begin{bmatrix}
            1 & 0.1 \\
            0.1(1-sin(2\pi t/T)) & 1
            \end{bmatrix},
    B_t = \begin{bmatrix}
            0\\
            0.1
            \end{bmatrix}   
\end{align*}
The disturbance is governed by a biased sinusoidal behavior:
\begin{align*}
    w_t = a_0 + a_1 sin(2\pi t/T)\;,
\end{align*}
where the parameters $a_0$ and $a_1$ are bounded with $a_0 \in \begin{bmatrix}
            -0.1, & 0.1 \\
            -0.1, & 0.1
            \end{bmatrix}$\;,\;
$a_1 \in \begin{bmatrix}
            -0.1, & 0.1 \\
            -0.1, & 0.1
            \end{bmatrix}$. The system is subject to a fixed input constraints, periodic state constraints and optimized over periodic stage cost:
\begin{align*}
    u \in [-10,10],
    &\left\{
        \begin{array}{lr}
        \,[-1,-3]^T\leq x \leq [4,3]^T, & t < T/2 \\
        \,[-4,-3]^T\leq x \leq [1,3]^T, & t \geq T/2
        \end{array}
        \right. ,\notag\\
     l_t(x_t,u_t) = &\left\{
        \begin{array}{lr}
        ||x_{1,t}-x_{1,ref}||_2^2 + ||u_t||_2^2, & t < T/2 \\
        ||x_{1,t}+x_{1,ref}||_2^2 + ||u_t||_2^2, & t \geq T/2
        \end{array}
        \right.,\notag       
\end{align*}
where $x_t = [x_{1,t},x_{2,t}]^T$ and $x_{1,ref} = 2$ so that the periodic stage cost is induced by a switching set-point.

The experiment is carried out with $x_s = [3;0]^T$ and a prediction horizon $N=4$. At each iteration, $a_1$ and $a_2$ are uniformly sampled from its domain. The feedback gain $K_t$ in~\eqref{sys_nom:err_dyn} is chosen as the LQR gain computed with $Q=I$ and $R=1$. 

In \textbf{Figure} \ref{fig1_cost}, the optimal cost, which corresponds to the solution of the problem~\eqref{prob_general_robust}, is time varying because $\{a_0, a_1\}$ change values among iterations. We notice that the cost difference between $J_{{LMPC}}$ and $J^{\ast}$ tends to diminish and the LMPC solution converges to the optimal solution. \textbf{Figure} \ref{fig1_shift} shows that the closed-loop cumulative cost is upper bounded by any shifted cumulative cost of historical iterations, as promised by Theorem \ref{thm_noninc}. 
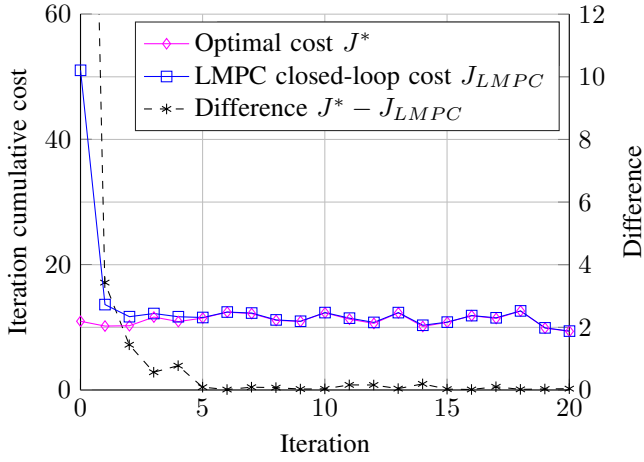
\begin{figure}[htbp]
\centering
%
%
\definecolor{mycolor1}{rgb}{1.00000,0.00000,1.00000}%
\definecolor{mycolor2}{rgb}{0.85000,0.32500,0.09800}%

\begin{tikzpicture}

\begin{axis}[%
width = 6.5cm,
height = 5cm,
scale only axis,
grid=major,
axis x line*=bottom,
axis y line*=right,
xmin=0,
xmax=20,
xlabel={Iteration},
ymin=0,
ymax=12,
ylabel={Difference},
legend style={legend cell align=left, align=left, draw=white!15!black}
]

\addplot [color=black, dashed, mark=asterisk, mark options={solid, black}]
  table[row sep=crcr]{%
0	40.071203185366\\
1	3.42855566367138\\
2	1.44347066200014\\
3	0.566726148132691\\
4	0.775346598294586\\
5	0.0901441705544119\\
6	0.0134500901338939\\
7	0.0894241156550724\\
8	0.0747480596155476\\
9	0.0295723334919433\\
10	0.0359314695032449\\
11	0.162011485726005\\
12	0.159083615887011\\
13	0.0433669476646781\\
14	0.194092137249578\\
15	0.0272411934640058\\
16	0.0156179669923926\\
17	0.102454921889631\\
18	0.020014761250108\\
19	0.0315721450581048\\
20	0.0419530874606906\\
};
\addlegendentry{Difference $J^{\ast}-J_{LMPC}$}

\end{axis}

\begin{axis}[%
width = 6.5cm,
height = 5cm,
scale only axis,
axis x line=none,
axis y line*=left,
xmin=0,
xmax=20,
ymin=0,
ymax=60,
ylabel={Iteration cumulative cost},
legend style={legend cell align=left, align=left, draw=white!15!black}
]

\addplot [color=mycolor1, mark=diamond, mark options={solid, mycolor1}]
  table[row sep=crcr]{%
0	10.9355806550508\\
1	10.1708757476224\\
2	10.2226521706689\\
3	11.6072901894528\\
4	10.8848321237311\\
5	11.4491608474268\\
6	12.3973930553465\\
7	12.1605654058721\\
8	11.0779706284117\\
9	10.9110593883071\\
10	12.2884275506133\\
11	11.2719341106073\\
12	10.5952415865758\\
13	12.2767460146195\\
14	10.1053987720775\\
15	10.7727381030392\\
16	11.826057482139\\
17	11.3873084878357\\
18	12.5716326884463\\
19	9.85264918682547\\
20	9.32683496885957\\
};
\addlegendentry{Optimal cost $J^{\ast}$}

\addplot [color=blue, mark=square, mark options={solid, blue}]
  table[row sep=crcr]{%
0	51.0067838404167\\
1	13.5994314112938\\
2	11.666122832669\\
3	12.1740163375855\\
4	11.6601787220257\\
5	11.5393050179813\\
6	12.4108431454804\\
7	12.2499895215272\\
8	11.1527186880273\\
9	10.940631721799\\
10	12.3243590201165\\
11	11.4339455963333\\
12	10.7543252024628\\
13	12.3201129622842\\
14	10.299490909327\\
15	10.7999792965032\\
16	11.8416754491314\\
17	11.4897634097253\\
18	12.5916474496964\\
19	9.88422133188357\\
20	9.36878805632026\\
};
\addlegendentry{LMPC closed-loop cost $J_{LMPC}$}

\addplot [color=black, dashed, mark=asterisk, mark options={solid, black}]
  table[row sep=crcr]{%
0	2000\\
};
\addlegendentry{Difference $J^{\ast}-J_{LMPC}$}
\end{axis}
\end{tikzpicture}%
\caption{Cumulative cost of each iteration}
\label{fig1_cost}
\end{figure}

\begin{figure}[htbp]
\centering
%
%
\definecolor{mycolor1}{rgb}{1.00000,0.00000,1.00000}%
\definecolor{mycolor2}{rgb}{0.85000,0.32500,0.09800}%

\begin{tikzpicture}

\begin{axis}[%
width = 6.5cm,
height = 5cm,
scale only axis,
grid=major,
xmin=0,
xmax=14,
xlabel={Iteration},
ymin=0,
ymax=70,
ylabel={Shifted cumulative cost},
axis x line*=bottom,
axis y line*=left,
legend style={font=\scriptsize}
]

\addplot [color=green, mark=o, mark options={solid, green}]
  table[row sep=crcr]{%
0	66.1281422094552\\
1	23.2818745455361\\
2	12.3824228876033\\
3	21.4238481995835\\
4	38.3314135632827\\
};
\addlegendentry{Shifted cumulative cost: 5th iteration}

\addplot [color=mycolor1, mark=diamond, mark options={solid, mycolor1}]
  table[row sep=crcr]{%
0	56.5782343961673\\
1	17.2305406893372\\
2	12.9037327123386\\
3	15.2988071257825\\
4	26.7022344780638\\
5	13.8054918510988\\
6	15.4505844964728\\
7	16.8988097936823\\
8	18.9062140153362\\
9	14.761439583391\\
};
\addlegendentry{Shifted cumulative cost: 10th iteration}

\addplot [color=blue, mark=square, mark options={solid, blue}]
  table[row sep=crcr]{%
0	49.718826550935\\
1	13.9347584238187\\
2	12.0975019797278\\
3	13.3316281853904\\
4	24.2994780163024\\
5	13.3350256854474\\
6	13.5799201285473\\
7	13.6745661163116\\
8	15.5004098017435\\
9	12.6810817093658\\
10	11.1520177822925\\
11	13.8012017594572\\
12	12.2969669637185\\
13	11.97362431494\\
14	17.2149475822545\\
};
\addlegendentry{Shifted cumulative cost: 15th iteration}

\addplot [color=green, dashed]
  table[row sep=crcr]{%
0	11.5393050179813\\
1	11.5393050179813\\
2	11.5393050179813\\
3	11.5393050179813\\
4	11.5393050179813\\
};
\addlegendentry{Closed-loop cumulative cost: 5th iteration}

\addplot [color=mycolor1, dashed]
  table[row sep=crcr]{%
0	12.3243590201165\\
1	12.3243590201165\\
2	12.3243590201165\\
3	12.3243590201165\\
4	12.3243590201165\\
5	12.3243590201165\\
6	12.3243590201165\\
7	12.3243590201165\\
8	12.3243590201165\\
9	12.3243590201165\\
};
\addlegendentry{Closed-loop cumulative cost: 10th iteration}

\addplot [color=blue, dashed]
  table[row sep=crcr]{%
0	10.7999792965032\\
1	10.7999792965032\\
2	10.7999792965032\\
3	10.7999792965032\\
4	10.7999792965032\\
5	10.7999792965032\\
6	10.7999792965032\\
7	10.7999792965032\\
8	10.7999792965032\\
9	10.7999792965032\\
10	10.7999792965032\\
11	10.7999792965032\\
12	10.7999792965032\\
13	10.7999792965032\\
14	10.7999792965032\\
};
\addlegendentry{Closed-loop cumulative cost: 15th iteration}
\end{axis}

\end{tikzpicture}%
\caption{Comparison of shifted and closed-loop cumulative cost starting from time $0$}
\label{fig1_shift}
\end{figure}
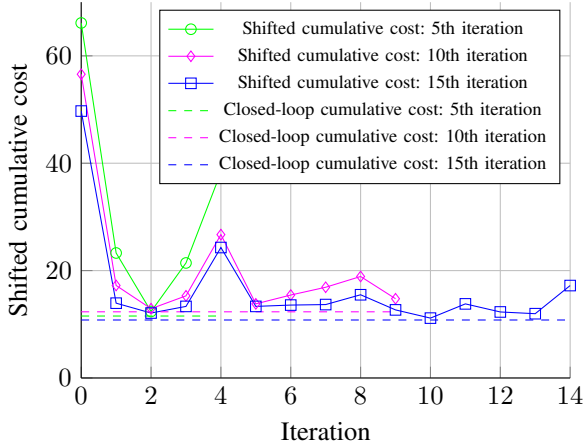

\subsection{A single zone building system}
A small scale linear time invariant building model\cite{oldewurtel2008tractable} with $x_{t+1} = Ax_t + Bu_t + Cw_t$ is considered, where
\begin{align*}
    A &= \begin{bmatrix}
            0.8511 & 0.0541 & 0.0707 \\
            0.1293 & 0.8635 & 0.0055 \\
            0.0989 & 0.0032 & 0.7541 \\
            \end{bmatrix},
    B = \begin{bmatrix}
            0.0035\\
            0.0003\\
            0.0002
            \end{bmatrix} \notag\\    
    C &= 10^{-3}*\begin{bmatrix}
            22.2170 & 1.7912 & 42.2123 \\
            1.5376 & 0.6944 & 2.9214 \\
            103.1813 & 0.1032 & 196.0444 \\
            \end{bmatrix}\;.
\end{align*}
The states $x=[x_1,x_2,x_3]^T$ represent the temperatures of the room, the wall connected with another room, and the wall connected to the outside respectively. The single input is heating and cooling. Suppose the sampling rate of the system is $10$ minutes; a one-day iteration consists of $144$ time steps.

In this test, the disturbances of internal heat-gains, solar-radiation and external temperature are considered, which are denoted by $w = [w_1,w_2,w_3]^T$ accordingly. These disturbances all reveal daily repetitive patterns and can be predicted~\cite{oldewurtel2012use}. For the sake of simplicity, We use the combination of sinusoidal, triangular and square wave functions and white noise to approximate the decomposition of disturbances in~\eqref{disturbance_finite}:
\begin{align*}
    w_{1,t} &=  a_1+a_2sin(2\pi t/T)+w_{r,1,t}\\
    w_{2,t} &= \left\{
        \begin{array}{ll}
        a_3(4t-T)/T+w_{r,2,t}, & T/4\leq t < T/2\\
        a_3(3T-4t)/T+w_{r,2,t}, & T/2\leq t < 3T/4\\
        w_{r,2,t}, & t< T/4 \vee t \geq 3T/4
        \end{array}
        \right.\\
    w_{3,t} &= \left\{
        \begin{array}{ll}
        a_4+a_5+w_{r,3,t}, & T/3\leq t < 3T/4\\
        a_4+w_{r,3,t}, & t<T/3 \vee t \geq3T/4
        \end{array}
        \right.    
\end{align*}
, where the parameters and white noise are bounded by:
\begin{align*}
    &a_1\in [10,14], a_2\in [-6,-2] \\
    &a_3\in [0,16],a_4\in [0,2],a_5\in [6,7] \\
    &w_{r,1} \in [-3,3],w_{r,2} \in [-5,5],w_{r,3} \in [-2,2]
\end{align*}

The room temperature is required to satisfy a comfort constraint during work time and the constraint is relaxed at night. The constraints are modeled as:
\begin{align*}
u \in [-30,30],
\left\{
    \begin{array}{ll}
    \,18\leq x_1 \leq 30, & t < T/3 \vee t \geq 3T/4 \\
    \,22\leq x_1 \leq 26, & T/3 \leq t \leq 3T/4
    \end{array}
    \right.
\end{align*}

The control objective is to regulate the room temperature to a time-varying reference
\begin{align*}
    x_{1,ref,t}=\left\{
    \begin{array}{ll}
    20, & t < T/3 \vee t \geq 3T/4 \\
    24, & T/3 \leq t < 3T/4
    \end{array}
    \right.
\end{align*}
while minimizing the energy cost.
The stage cost is $l_t(x_t,u_t) = ||x_{1,t}-x_{1,ref,t}||_2^2 + ||cp_t u_t||_1$, in which $cp_t$ denotes the electricity price and there are periodic high price and low price periods:
 \begin{align*}
    cp_t=&\left\{
        \begin{array}{ll}
        \,1, & \, t < 5T/12 \vee t \geq 2T/3 \\
        \,2, & \, 5T/12 \leq T < 2T/3
        \end{array}
        \right.
\end{align*}

The experiment is carried out with an initial state $x_s = [19;19;15]^T$ and prediction horizon $N=16$. The feedback gain $K$ in~\eqref{sys_nom:err_dyn} and~\eqref{sys:error} is computed by the optimal LQR gain choosing parameters $Q=10I$ and $R=1$. The constraints are tightened by robust positive invariant $\varepsilon$ in~\eqref{sys:tighten}, which is computed by an approximation method in~\cite{rakovic2005invariant}. The noise parameters $\{a_1, a_2, a_3, a_4, a_5\}$ and  the white noise $w_{r,1},w_{r,2}, w_{r,3}$ are uniformly sampled from their domain. 


In \textbf{Figure} \ref{fig2_cost}, the cumulative cost of LMPC converges to the optimal cumulative cost. In particular, the optimal cost refers to the optimal solution of problem~\eqref{prob_general_robust}. Note the cost difference between $J_{\text{LMPC}}$ and $J^{\ast}$ does not decrease monotonically due to the shifted trajectories. However,  \textbf{Figure} \ref{fig2_shift} shows that the closed-loop cumulative cost from $t=0$ is upper bounded by any shifted cumulative cost from $t=0\,$, guaranteed by Theorem \ref{thm_noninc}.
In \textbf{Figure} \ref{fig2_sys_1}, \ref{fig2_sys_3}, \ref{fig2_sys_20}, we compare the first state $x_1$ among the $0$th shifted trajectory, trajectory by LMPC and the optimal trajectory at iteration $1,3,20$, which shows the convergence toward the optimal solution $J^*$. And the final convergent state trajectory is shown in \textbf{Figure}. \ref{fig2_sys_20}.

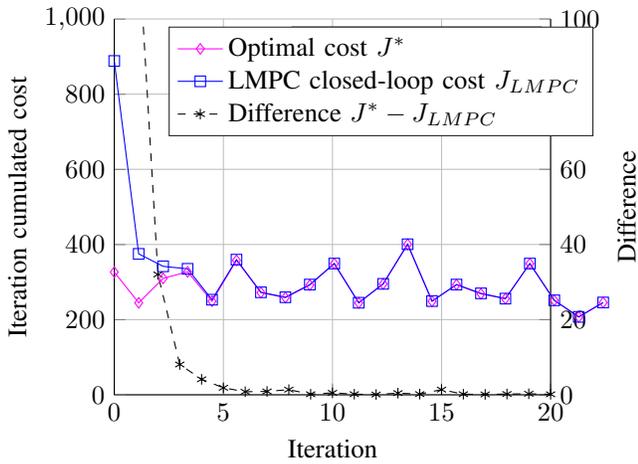
\begin{figure}[htbp]
\centering
%
%
\definecolor{mycolor1}{rgb}{1.00000,0.00000,1.00000}%
\definecolor{mycolor2}{rgb}{0.85000,0.32500,0.09800}%

\begin{tikzpicture}

\begin{axis}[%
width = 5.8cm,
height = 5cm,
scale only axis,
grid=major,
axis x line*=bottom,
axis y line*=right,
xmin=0,
xmax=20,
xlabel={Iteration},
ymin=0,
ymax=100,
ylabel={Difference},
legend style={legend cell align=left, align=left, draw=white!15!black,font=\tiny}
]

\addplot [color=black, dashed, mark=asterisk, mark options={solid, black}]
  table[row sep=crcr]{%
0	561.783780484232\\
1	130.260096247504\\
2	32.0182596837835\\
3	8.06413143984952\\
4	4.11736739837394\\
5	1.86449683304932\\
6	0.8123601790424\\
7	0.888805919012952\\
8	1.36746382483147\\
9	0.148952158844395\\
10	0.51651036636747\\
11	0.164886202849573\\
12	0.0725344210528078\\
13	0.442729528427037\\
14	0.184699042316709\\
15	1.37400181820061\\
16	0.159421685900327\\
17	0.103543471179364\\
18	0.206324989347053\\
19	0.269811258052385\\
20	0.131239875716517\\
};
\addlegendentry{Difference $J^{\ast}-J_{LMPC}$}

\end{axis}

\begin{axis}[%
width = 6.5cm,
height = 5cm,
scale only axis,
axis x line=none,
axis y line*=left,
xmin=0,
xmax=20,
ymin=0,
ymax=1000,
ylabel={Iteration cumulated cost},
legend style={legend cell align=left, align=left, draw=white!15!black}
]

\addplot [color=mycolor1, mark=diamond, mark options={solid, mycolor1}]
  table[row sep=crcr]{%
0	326.802932352713\\
1	245.030355712212\\
2	310.079151055958\\
3	327.430411549252\\
4	249.250174930154\\
5	358.109656020727\\
6	272.202321672855\\
7	258.939323977525\\
8	292.265586912023\\
9	349.371080571215\\
10	244.733167413117\\
11	295.337342149277\\
12	400.846957992817\\
13	249.184374951633\\
14	293.409484464002\\
15	268.822444497747\\
16	256.068749277233\\
17	349.400485187715\\
18	251.50663670727\\
19	207.410368506045\\
20	245.817631978199\\
};
\addlegendentry{Optimal cost $J^{\ast}$}

\addplot [color=blue, mark=square, mark options={solid, blue}]
  table[row sep=crcr]{%
0	888.586712836944\\
1	375.290451959716\\
2	342.097410739741\\
3	335.494542989102\\
4	253.367542328528\\
5	359.974152853777\\
6	273.014681851897\\
7	259.828129896538\\
8	293.633050736855\\
9	349.520032730059\\
10	245.249677779484\\
11	295.502228352126\\
12	400.91949241387\\
13	249.62710448006\\
14	293.594183506318\\
15	270.196446315948\\
16	256.228170963134\\
17	349.504028658895\\
18	251.712961696617\\
19	207.680179764098\\
20	245.948871853916\\
};
\addlegendentry{LMPC closed-loop cost $J_{LMPC}$}

\addplot [color=black, dashed, mark=asterisk, mark options={solid, black}]
  table[row sep=crcr]{%
0	2000\\
};
\addlegendentry{Difference $J^{\ast}-J_{LMPC}$}
\end{axis}
\end{tikzpicture}%
\caption{Cumulative cost of each iteration}
\label{fig2_cost}
\end{figure}

\begin{figure}[htbp]
\centering
%
%
\definecolor{mycolor1}{rgb}{1.00000,0.00000,1.00000}%
\definecolor{mycolor2}{rgb}{0.85000,0.32500,0.09800}%

\begin{tikzpicture}

\begin{axis}[%
width = 6.5cm,
height = 5cm,
scale only axis,
grid=major,
xmin=0,
xmax=14,
xlabel={Iteration},
ymin=200,
ymax=1000,
ylabel={Shifted cumulative cost},
axis x line*=bottom,
axis y line*=left,
legend style={font=\scriptsize}
]

\addplot [color=green, mark=o, mark options={solid, green}]
  table[row sep=crcr]{%
0	924.983964121269\\
1	484.918752530683\\
2	391.857301970996\\
3	367.897099867655\\
4	367.188064672156\\
};
\addlegendentry{Shifted cumulative cost: 5th iteration}

\addplot [color=mycolor1, mark=diamond, mark options={solid, mycolor1}]
  table[row sep=crcr]{%
0	823.330177266517\\
1	375.398760326215\\
2	293.206878785662\\
3	267.880862498387\\
4	251.015082351574\\
5	266.255487973472\\
6	251.411219947396\\
7	249.565221154253\\
8	256.479891250746\\
9	260.896086626662\\
};
\addlegendentry{Shifted cumulative cost: 10th iteration}

\addplot [color=blue, mark=square, mark options={solid, blue}]
  table[row sep=crcr]{%
0	881.606880159329\\
1	411.918191409073\\
2	323.59822724094\\
3	296.517058964208\\
4	281.109513911204\\
5	293.262150326431\\
6	280.030055052483\\
7	276.543513609491\\
8	282.452426000402\\
9	288.483404713102\\
10	275.788120413057\\
11	282.309187192427\\
12	291.468353575573\\
13	276.041287020983\\
14	281.820265670881\\
};
\addlegendentry{Shifted cumulative cost: 15th iteration}

\addplot [color=green, dashed]
  table[row sep=crcr]{%
0	359.974152853777\\
1	359.974152853777\\
2	359.974152853777\\
3	359.974152853777\\
4	359.974152853777\\
};
\addlegendentry{Closed-loop cumulative cost: 5th iteration}

\addplot [color=mycolor1, dashed]
  table[row sep=crcr]{%
0	245.249677779484\\
1	245.249677779484\\
2	245.249677779484\\
3	245.249677779484\\
4	245.249677779484\\
5	245.249677779484\\
6	245.249677779484\\
7	245.249677779484\\
8	245.249677779484\\
9	245.249677779484\\
};
\addlegendentry{Closed-loop cumulative cost: 10th iteration}

\addplot [color=blue, dashed]
  table[row sep=crcr]{%
0	270.196446315948\\
1	270.196446315948\\
2	270.196446315948\\
3	270.196446315948\\
4	270.196446315948\\
5	270.196446315948\\
6	270.196446315948\\
7	270.196446315948\\
8	270.196446315948\\
9	270.196446315948\\
10	270.196446315948\\
11	270.196446315948\\
12	270.196446315948\\
13	270.196446315948\\
14	270.196446315948\\
};
\addlegendentry{Closed-loop cumulative cost: 15th iteration}
\end{axis}

\end{tikzpicture}%
\caption{Comparison of shifted and closed-loop cumulative cost starting from time $0$}
\label{fig2_shift}
\end{figure}

\begin{figure}[htbp]
\centering
\input{sys2_1.tex}
\caption{Building system:$x_1$ at iteration 1}
\label{fig2_sys_1}
\end{figure}

\begin{figure}[htbp]
\centering
\input{sys2_3.tex}
\caption{Building system:$x_1$ at iteration 3}
\label{fig2_sys_3}
\end{figure}

\begin{figure}[htbp]
\centering
\input{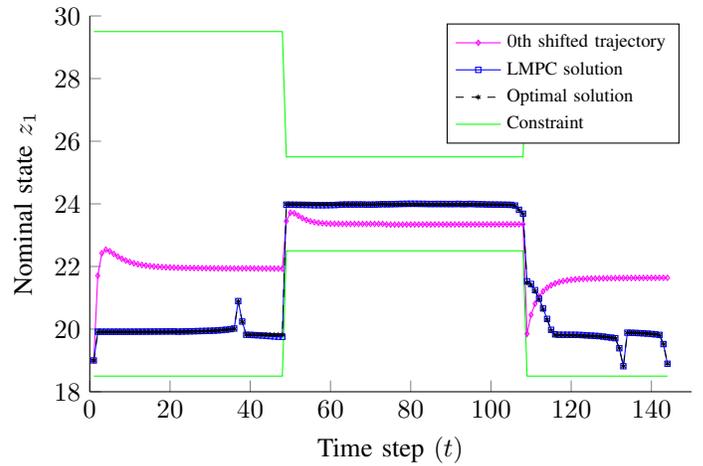}
\caption{Building system:$x_1$ at iteration 20}
\label{fig2_sys_20}
\end{figure}

\section{Conclusion}
We presented a novel, less conservative, robust LMPC scheme for  periodically correlated process noise in building control. The framework is specified for time-varying iterative tasks with periodicity in the system dynamics, stage cost and constraints. Feasibility and performance convergence are verified by a single zone building system.

\bibliographystyle{./bibliography/IEEEtran}
\bibliography{ref.bib}

\section{Appendix}
\subsection{Robust and stochastic LMPC} \label{app:robust}
The long-horizon optimal problem~\eqref{prob_general} is difficult to solve because the stochastic $w_{r,t}^j$ leads to a stochastic optimization objective and it optimizes over all possible control policy.   
A possible approach to deal with the problem is the tube method with a nominal optimization objective~\cite{kouvaritakis2016model}. Denote by $z_t^j$ the nominal state, by $e_t^j = x_t^j - z_t^j$ the error state, by $v_t^j$ the nominal input, and by $Ke(k)$ the tube controller, where $K$ stabilize all different $A_t+B_tK$. Then the tube controller is defined as
\begin{align} \label{sys:error}
    z_{t+1}^j &= A_t z_t^j + B_t v_t^j + C_t w_{\theta^j,t}^j, \notag\\
    e_{t+1}^j &= (A_t+B_t K)e_t^j +  C_t w_{r,t}^j \notag\\
    u_t^j &= K e_t^j + v_t^j
\end{align}
and $z_0^j = x_s$. 
Compute the \textit{Robust Positive Invariant set} $\varepsilon$ of $e_t$ with dynamics~\eqref{sys:error}. Then a constraint tightening is applied on the nominal system:
\begin{align} \label{sys:tighten}
  F_t z_t + G_t v_t \leq f_t - (F_t+G_t K)e_t, \forall e_t \in \varepsilon.
\end{align}
Thus, optimize the problem over the nominal stage cost $l_t(z_t,v_t)$ with the constraints~\eqref{sys:tighten}, a robust problem in~\eqref{prob_general_robust} is derived.

\end{document}